 \newcommand{\beqn}{\begin{eqnarray}}
 \newcommand{\eeqn}{\end{eqnarray}}
 \newcommand{\be}{\begin{equation}}
 \newcommand{\ee}{\end{equation}}
 \newcommand{\ba}{\begin{array}}
 \newcommand{\ea}{\end{array}}
\newcommand{\br}{\begin{remark}}
 \newcommand{\er}{\end{remark}}
\newcommand{\bc}{\begin{cor}}
 \newcommand{\ec}{\end{cor}}
 \newcommand{\pa}{\partial}
 \newcommand{\re}{\ref}
 \newcommand{\ci}{\cite}
 \newcommand{\ds}{\displaystyle}
 \newcommand{\la}{\label}
 \newcommand{\rIm}{{\rm Im\5}}
 \newcommand{\rRe}{{\rm Re\5}}
 \newcommand{\supp}{{\rm supp~}}
\newcommand{\fr}{\frac}
\newcommand{\ov}{\overline}
\newcommand{\ti}{\tilde}
\newcommand{\cA}{{\cal A}}
\newcommand{\cG}{{\cal G}}
\newcommand{\cH}{{\cal H}}
\newcommand{\cL}{{\cal L}}
\newcommand{\cO}{{\cal O}}
\newcommand{\cP}{{\cal P}}
\newcommand{\cR}{{\cal R}}
\newcommand{\cU}{{\cal U}}
\newcommand{\cV}{{\cal V}}
\newcommand{\ve}{\varepsilon}
\newcommand{\de}{\delta}
\newcommand{\al}{\alpha}
\newcommand{\Ga}{\Gamma}
\newcommand{\si}{\sigma}
\newcommand{\om}{\omega}
\newcommand{\na}{\nabla}
\newcommand{\Si}{\Sigma}
\newcommand{\lam}{\lambda}
\newcommand{\5}{{\hspace{0.5mm}}}
\newcommand{\R}{\mathbb{R}}
\newcommand{\C}{\mathbb{C}}
\newtheorem{theorem}{Theorem}[section]
\newtheorem{lemma}[theorem]{Lemma}
\newtheorem{remark}[theorem]{Remark}
\newtheorem{cor}[theorem]{Corollary}
\newtheorem{pro}[theorem]{Proposition}
\newcommand{\bp}{\begin{pro}}
\newcommand{\ep}{\end{pro}}
\newcommand{\const}{\mathop{\rm const}\nolimits}
\newcommand{\sgn}{\mathop{\rm sgn}\nolimits}
\begin{document}
\begin{titlepage}
\bigskip\bigskip\bigskip

\begin{center}
{\Large\bf Weighted Energy Decay for 1D Dirac Equation}
\vspace{1cm}
\\
{\large E.~A.~Kopylova}
\footnote{Supported partly by the FWF, DFG and RFBR grants}
\\
{\it Institute for Information Transmission Problems RAS\\
B.Karetnyi 19, Moscow 101447,GSP-4, Russia}\\
e-mail:~elena.kopylova@univie.ac.at
\end{center}

\date{}

\vspace{0.5cm}
\begin{abstract}
\noindent We obtain a dispersive long-time decay in weighted
energy norms for solutions of the 1D  Dirac equation with generic
potential. The decay extends the results obtained by Jensen,
Kato and Murata for the  Schr\"odinger equations.

\smallskip

\noindent
{\em Keywords}: dispersion, Dirac equation,
relativistic equations, resolvent, spectral representation,
weighted spaces, continuous spectrum, Born series, convolution,
long-time asymptotics.
\smallskip

\noindent
{\em 2000 Mathematics Subject Classification}: 35L10, 34L25, 47A40, 81U05
\end{abstract}

\end{titlepage}

\setcounter{equation}{0}
\section{Introduction}
In this paper, we establish a dispersive long time decay
for the solutions to 1D Dirac equation
\be\la{Dir}
i\dot\psi(x,t)=\cH\psi(x,t):=
i\al\psi'(x,t)+m\beta\psi(x,t)+{\cal V}(x)\psi(x,t),\quad x\in\R,\quad m> 0
\ee
in weighted energy norms.
Here   $\psi(x,t)\in\C^2$ for $(x,t)\in\R^2$,
\be\la{albe}
\al= \left(\ba{ll}
-1 & 0\\
0 & 1
\ea  \right),\quad
\beta=
\left(\ba{ll}
0 & 1\\
1& 0\\
\ea  \right)
\ee
and ${\cal V}(x)$ a given Hermitian  matrix potential: 
\be\la{Vij}
{\cal V}(x)=\left(\ba{ll}
{\cal V}_{11}(x) & {\cal V}_{12}(x)\\
{\cal V}_{21}(x) & {\cal V}_{22}(x)
\ea  \right),~~~~~~
{\cal V}_{11}(x),~{\cal V}_{22}(x)\in\R,~~~~
{\cal V}_{21}(x)=\ov {{\cal V}_{12}(x)},~~~~~~~~x\in\R.
\ee
Hence, the Dirac operator $\cH$ is Hermitian.
The matrices  $\al$ and $\beta$ satisfy the  relations
\be\la{ab}
\al^2=\beta^2=I,\quad \al\beta+\beta\al=0
\ee
For $s,\si\in\R$,
let us denote by $H^s_\si=H^s_\si (\R^3)$
the weighted Sobolev spaces \ci{A}, with the finite norms
$$
  \Vert u\Vert_{H^s_\si}=\Vert\langle x
\rangle^\si\langle\na\rangle^s u\Vert_{L^2(\R)}<\infty,
\quad\quad \langle x\rangle=(1+|x|^2)^{1/2}
$$
We assume that
\be \label{V}
|{\cal V}_{11}(x)|+|{\cal V}_{22}(x)|+
|{\cal V}_{12}(x)|+|{\cal V}'_{12}(x)|\le C\langle x\rangle^{-\beta},~~x\in\R
\ee
for some $\beta>5$. Then the multiplication by ${\cal V}_{12}$
is bounded operators  $H^1_s \to H^1_{s+\beta}$ for any $s\in\R$.

We restrict ourselves to ``nonsingular case''
when the truncated resolvent of the  operator $\cH$
is bounded at the edge points $\lam=\pm m$  of the continuous spectrum.

Our main result is the following long time decay of the solutions
to (\re{Dir}): in the nonsingular case
\be\label{full}
 \Vert\cP_c\psi(t)\Vert_
 {L^2_{-\si}}=\cO (|t|^{-3/2}),\quad t\to\pm\infty
\ee
for initial data $\psi_0=\psi(0)\in L^2_\si:=H^0_\si\otimes\C^2$ with $\sigma>5/2$
where  $\cP_c$ is a Riesz projector onto the continuous spectrum
of the operator $\cH$.
The decay is desirable for the study of asymptotic stability and scattering
for solutions to nonlinear Dirac equations.

Let us comment on previous results in this direction. The decay of
type (\re{full}) in weighted norms has been established first by
Jensen and Kato \ci{jeka} for the Schr\"odinger equation in the
dimension $n=3$. The result has been extended to all other
dimensions by Jensen and Nenciu \ci{je1,je2,JN2001}, and to more
general PDEs of the Schr\"odinger type by Murata \ci{M}. In
\ci{1dkg}-\ci{3dkg} the decay of type (\re{full}) in  weighted
energy norms has been proved  for the Klein-Gordon equations. For
Dirac equations the Strichartz estimates
were established in \cite{MNNO} . The decay $\sim t^{-1}$ in
$L^\infty $ norm were established in \ci{AF} for Dirac equations
with small  potential. The decay of type (1.9) in weighted
norms for and Dirac equation without any smallness conditions on
the potential was not obtained before.

Let us comment on our techniques.
We extend our approach \ci{1dkg} to the Dirac equation. It is well known
that the decay (\re{full}) violates for the free 1D Klein-Gordon and Dirac
equation corresponding to ${\cal V}(x)=0$ when the solutions slow decay, 
like $\sim t^{-1/2}$.
Hence, the decay (\re{full}) cannot be deduced by perturbation arguments
from the corresponding estimate for the free equation.
The slow decay is caused by the ``zero resonance function''  $\psi(x)=\const$
corresponding to the end point $\lam=0$ of the continuous
spectrum of the  operator $d^2/dx^2$.

Main idea of our approach is a spectral analysis of the ``bad'' term,
with the slow decay $\sim t^{-1/2}$. Namely, we show that the bad term does not
contribute to the high energy component of solution to the free equation,
and the high energy component decays  like $t^{-3/2}$.
Then we prove the decay $\sim t^{-3/2}$ for the high energy component
of solution to perturbed equation (\re{Dir}) 
using finite Born series and convolutions. 
For the proof we apply a gauge transformation to obtain a suitable expression 
for the resolvent of the operator $\cH$ via  resolvent of the corresponding "squared
Dirac operator" which is a matrix Schr\"odinger operator  with a perturbation. 
The perturbation does not contain differential operators which allows us to apply 
the  estimates (\re{AR}) obtained in \cite{jeka} for Schr\"odinger operator.

For the low energy component of solution to perturbed equation, 
the decay $\sim t^{-3/2}$ follows 
in the ``nonsingular case'' by methods \ci{jeka, M}.

Our paper is organized as follows.
In Section \ref{fKG} we obtain the time decay for the solution to the
free Dirac
equation and state the spectral properties of the free resolvent which follow from the
corresponding known properties of the free Schr\"odinger resolvent.
In Section \ref{pKG} we obtain spectral properties of the perturbed resolvent
and prove the decay \eqref{full}.
\setcounter{equation}{0}
\section{Free Dirac equation}\la{fKG}
First, we consider the free Dirac equation:
\be\la{Dir0}
i\dot\psi(x,t)=\cH_0\psi(x,t):=i\al\psi'(x,t)+m\beta\psi(x,t)
\ee
Denote by $\cU(t):\psi(\cdot,0)\to\psi(\cdot,t)$ the dynamical  group
of the equation (\ref{Dir0}).
It is strongly continuous group in $L^2:=L^2(\R)\otimes\C^2$.
The group is unitary that follows from the charge conservation.
\subsection{Spectral properties}
We state spectral properties of the free  dynamical group
$\cU(t)$ applying known results of \ci{A, M} which concern the
corresponding spectral properties of the free  Schr\"odinger
dynamical group. For $t>0$ and $\psi_0=\psi(0)\in L^2$,
the solution $\psi(t)$ to the free  equation (\re{Dir0})
admits the spectral Fourier-Laplace representation
\be\la{Gint}
  \theta(t)\psi(t)=\fr 1{2\pi i}\int\limits_{\R}e^{-i(\om+i\ve) t}
  \cR _0(\om+i\ve)\psi_0~d\om,~~~~t\in\R
\ee
with any $\ve>0$ where $\theta(t)$ is the Heaviside function,
$\cR _0(\omega)=({\cal H}_0-\omega)^{-1}$ is the resolvent of the operator ${\cal H}_0$.
The representation follows from the stationary equation
$\om\ti\psi^+(\om)=\cH_0\ti\psi^+(\om)+i\psi_0$
for the Fourier-Laplace transform
$\ti\psi^+(\om):=\ds\int_\R \theta(t)e^{i\om t}\psi(t)dt$,
$\om\in\C^+:=\{\rIm\om>0\}$.
The solution  $\psi(t)$ is  continuous bounded  function of
$t\in\R$ with the values in $L^2$
by the charge conservation for the free  equation (\re{Dir0}).
Hence, $\ti\psi^+(\om)=-i\cR_0(\om)\psi_0$
is analytic function of $\om\in\C^+$ with the values in $L^2$,
bounded for $\om\in\R+i\ve$. Therefore, the integral (\re{Gint})
converges in the sense of distributions of $t\in\R$ with the values in $L^2$.
Similarly to (\re{Gint}),
\be\la{Gints}
  \theta(-t)\psi(t)=-\fr 1{2\pi i}\int\limits_{\R}e^{-i(\om-i\ve) t}
  \cR _0(\om-i\ve)\psi_0~d\om,~~~~t\in\R
\ee
The resolvent $\cR _0(\omega)$ can be expressed in terms of the resolvent
$R_0(\zeta)=(-\pa^2_x-\zeta)^{-1}$ of the free Schr\"odinger
operator. Indeed, (\re{ab}) implies
\be\la{HH}
(\cH_0-\om)(\cH_0+\om)
=(i\al\pa_x+m\beta-\om)(i\al\pa_x+m\beta+\om)=(-\pa^2_x+m^2-\om^2)
\ee
Therefore,
\be\la{R0R0}
\cR_0(\om)=(i\al\pa_x+m\beta+\om)R_0(\om^2-m^2)
\ee
where $R_0(\zeta)$ is the operator with the integral kernel
\be\la{ef}
 R_0(\zeta,x-y)=-\frac{\exp(i\sqrt\zeta|x-y|)}{2i\sqrt\zeta},
 \quad\zeta\in\C\setminus [0,\infty),\quad\rIm\zeta^{1/2}>0
\ee
Denote by $\cL (B_1,B_2)$ the Banach space of bounded linear operators
from a Banach space $B_1$ to a Banach space $B_2$.
Explicit formula (\re{ef}) obviously implies the
properties of $R_0(\zeta)$ (cf. \ci{A, M}):
\medskip\\
i) $R_0(\zeta)$ is  analytic function of $\zeta\in\C\setminus [0,\infty)$
with the values in $\cL(H^0,H^2)$;
\\
ii) For $\zeta>0$, the convergence holds
\be\la{lap}
R_0(\zeta\pm i\ve)\to R_0(\zeta\pm i0), \quad\ve\to 0+
\ee
in $\cL(H^0_\si,H^2_{-\si})$ with $\si>1/2$, uniformly in $\zeta\ge r$ for any $r>0$.
\\
iii) In  $\cL (H^{0}_{\si};H^2_{-\si})$ with $\si>5/2$
the  asymptotics  holds
\be\la{exp0}
R_0(\zeta)= A_0\zeta^{-1/2}+A_1+\cO (\zeta^{1/2}),\quad
R'_0(\zeta)=-\fr 12A_0\zeta^{-3/2}+\cO (\zeta^{-1/2}),\quad
R''_0(\zeta)=\cO (\zeta^{-5/2})
\ee
as $\zeta\to 0$, $\zeta\in\C\setminus [0,\infty)$.
Here
\beqn\la{AA}
&&A_0={\rm Op}\Big[\frac i{2}\Big]\in\cL(H^{0}_{\si};H^2_{-\si}),~~\si>1/2\\
\nonumber
&&A_1={\rm Op}\Big[-\frac 12|x-y|\Big]\in\cL (H^{0}_{\si};H^2_{-\si}),~~\si>3/2
\eeqn
iv) For  $l=0,1$,  $k=0,1,2,...$ and $\si>1/2+k$  the asymptotics hold
\be\la{A}
 \Vert R_0^{(k)}(\zeta)\Vert_{\cL (H^0_\si,H^{l}_{-\si})}=\cO(|\zeta|^{-\fr{1-l+k}2}),
  \quad\zeta\to\infty,\quad \zeta\in\C\setminus(0,\infty)
\ee
\smallskip\\
Let us denote $\Ga:=(-\infty,-m)\cup(m,\infty)$,
The properties i) -- iv) and formula (\re{R0R0}) imply
\begin{lemma}\la{sp}
i) The resolvent $\cR _0(\om)$ is  analytic function of
$\om\in\C\setminus\ov\Ga$ with the values in  $\cL (L^2,L^2)$.
\\
ii) For $\om\in\Ga$, the convergence holds
\be\la{lap1}
\cR_0(\om\pm i\ve)\to \cR_0(\om\pm i0),~~\ve\to 0+
\ee
in $\cL(L^2_{\si},L^2_{-\si})$ with $\si>1/2$, uniformly in $|\om|\ge m+r$ for any $r>0$.
\\
iii) In $\cL (L^2_{\si};L^2_{-\si})$ with $\si>5/2$ the  asymptotics hold
 \begin{equation}\label{expR0}
\left.\ba{lll}
\cR_0(\om)=\cA_0^{\pm}(\om\mp m)^{-1/2}+\cA_1^{\pm}
+{\cal O}((\om\mp m)^{1/2})\\\\
\cR'_0(\om)=-\fr 12\cA_0^{\pm}(\om\mp m)^{-3/2}+{\cal O}((\om\mp m)^{-1/2})\\\\
\cR''_0(\om)={\cal O}((\om\mp m)^{-3/2})\ea\right|
~~\om\to\pm m,\quad\om\in\C\setminus\ov\Gamma
\end{equation}
Here
\beqn\nonumber
\cA_0^{\pm}\!\!\!&=&\!\!\!{\rm Op}\Big[\frac{i\sqrt m}{\sqrt{\pm 2}}\left(\ba{cc}
\pm 1 & 1\\
1 & \pm1
\ea  \right)\Big]\in\cL(L^2_{\si};L^2_{-\si}),~~\si>1/2\\\\
\nonumber\la{cAA}
\cA_1^{\pm}\!\!\!&=&\!\!\!{\rm Op}\Big[\frac {-m|x-y|}2\left(\!\ba{cc}
\pm 1 & 1\\
1 & \pm1
\ea\!\right)-\frac 12\left(\!\ba{cc}
\sgn (x-y) & 0\\
0 & \sgn (x-y)
\ea\!\right)\Big]\in\cL(L^2_{\si};L^2_{-\si}),~~\si>3/2
\eeqn
iv) For $k=0,1,2,...$ and  $\si>1/2+k$ the asymptotics hold
  \begin{equation}\label{bR0}
    \Vert \cR _0^{(k)}(\om) \Vert_{\cL (L^2 _{\sigma},L^2 _{-\sigma})}=\cO(1)
    \quad\om\to\infty,\quad\om\in\C\setminus\Ga
  \end{equation}
\end{lemma}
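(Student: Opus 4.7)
The plan is to deduce each of (i)--(iv) for the Dirac resolvent $\cR_0$ from the corresponding property of the Schr\"odinger resolvent $R_0$ by transporting through the identity $\cR_0(\om)=(i\al\pa_x+m\beta+\om)R_0(\om^2-m^2)$ of \eqref{R0R0}, combined with the expansions \eqref{exp0} and \eqref{A}. The underlying geometric input is that $\om\mapsto\om^2-m^2$ is a biholomorphism of $\C\setminus\ov\Ga$ onto $\C\setminus[0,\infty)$ sending $\om=\pm m$ to $\zeta=0$ and the two sides of $\Ga$ to the two sides of $(0,\infty)$ (with a sign swap on $(-\infty,-m)$). Items (i) and (ii) are then essentially formal: property (i) of $R_0$ together with the bounded multiplier $i\al\pa_x+m\beta+\om:H^2\to L^2$ gives analyticity of $\cR_0$ in $\cL(L^2,L^2)$; for (ii), for $\om>m$ (resp.\ $\om<-m$) the quantity $(\om\pm i\ve)^2-m^2$ approaches $\om^2-m^2>0$ with imaginary part of sign $\pm\sgn\om$, so \eqref{lap} in $\cL(H^0_\si,H^2_{-\si})$ together with boundedness of $i\al\pa_x+m\beta+\om$ in $\cL(H^2_{-\si},L^2_{-\si})$ yields \eqref{lap1} uniformly in $|\om|\ge m+r$.

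For (iii), I substitute $\zeta=\om^2-m^2=(\om\mp m)(\om\pm m)$ into \eqref{exp0}. Near $\om=\pm m$ this gives
\[
R_0(\om^2-m^2)=(\pm 2m)^{-1/2}\,A_0\,(\om\mp m)^{-1/2}+A_1+\cO\bigl((\om\mp m)^{1/2}\bigr)
\]
in $\cL(H^0_\si,H^2_{-\si})$. I then write $m\beta+\om=m(\beta\pm I)+(\om\mp m)$ and multiply out. The key simplification is that the kernel $i/2$ of $A_0$ is constant in $x$, so $i\al\pa_x A_0=0$ as an operator; hence only the $(m\beta+\om)$-piece contributes at the singular level, producing the coefficient $m(\beta\pm I)A_0(\pm 2m)^{-1/2}$ of $(\om\mp m)^{-1/2}$, which is $\cA_0^\pm$. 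At order $(\om\mp m)^0$ one collects $m(\beta\pm I)A_1$ from the $(m\beta+\om)R_0$ piece together with $i\al\pa_x A_1$ from the derivative piece; the latter has kernel proportional to $\al\,\sgn(x-y)$ since $\pa_x|x-y|=\sgn(x-y)$, and combining the two yields $\cA_1^\pm$. The asymptotics for $\cR_0'$ and $\cR_0''$ follow by formal term-by-term differentiation of this expansion, the leading $(\om\mp m)^{-3/2}$ singularity arising from differentiating $(\om\mp m)^{-1/2}$ in the $\cA_0^\pm$ term.

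For (iv), Leibniz combined with the chain rule $\pa_\om R_0(\om^2-m^2)=2\om R_0'(\om^2-m^2)$ (and only first-order dependence of the prefactor $i\al\pa_x+m\beta+\om$ on $\om$) expresses $\cR_0^{(k)}(\om)$ as a finite sum of terms of the form $p_j(\om)(i\al\pa_x+m\beta+\om)R_0^{(j)}(\om^2-m^2)$ and $q_j(\om)R_0^{(j)}(\om^2-m^2)$ with $j\le k$, where the polynomials $p_j,q_j$ have degrees dictated by Fa\`a di Bruno bookkeeping. Applying \eqref{A} with $l=1$ to the summands in which $\pa_x$ acts and $l=0$ elsewhere, a direct count shows that each summand is $\cO(1)$ in $\cL(L^2_\si,L^2_{-\si})$ as $\om\to\infty$; the weight restriction $\si>1/2+k$ is exactly what the top-order term $R_0^{(k)}$ demands in \eqref{A}. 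The main obstacle will be the matrix bookkeeping in (iii): carefully tracking signs, the factors of $\sqrt 2$ issuing from $(\om\mp m)^{-1/2}$ versus $(\om^2-m^2)^{-1/2}$, and verifying that the off-diagonal structure of $\al$ in $i\al\pa_x A_1$ combines with $m(\beta\pm I)A_1$ to reproduce precisely the stated form of $\cA_1^\pm$; the remaining items are essentially formal manipulations of asymptotic series.
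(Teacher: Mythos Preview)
Your approach is exactly the paper's: the lemma is presented as an immediate consequence of properties (i)--(iv) of $R_0$ together with formula \eqref{R0R0}, and your proposal simply carries this out in detail. One minor inaccuracy: the map $\om\mapsto\om^2-m^2$ is two-to-one on $\C\setminus\ov\Ga$ (for instance on $i\R$), not a biholomorphism, but the argument only uses that it sends $\C\setminus\ov\Ga$ into $\C\setminus[0,\infty)$ and the two sides of $\Ga$ to the two sides of $(0,\infty)$, which is correct.
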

\begin{cor}
\la{irep}
For $t\in\R$ and $\psi_0\in L^2_\si$ with $\si>1/2$,
the group $\cU(t)$ admits the integral representation
\be\la{Gint1}
  \cU(t)\psi_0=\frac 1{2\pi i}\int\limits_\Gamma e^{-i\om t}
  \Big[\cR _0(\om+i0)-\cR _0(\om-i0)\Big]\psi_0~ d\om
\ee
where the integral converges in the sense of distributions of $t\in\R$
with the values in $L^2_{-\si}$.
\end{cor}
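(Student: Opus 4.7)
The plan is to sum the two Fourier--Laplace representations (\ref{Gint}) and (\ref{Gints}), deform the horizontal contours $\R\pm i\ve$ onto the spectrum $\ov\Ga$ using the analyticity of $\cR_0$ off $\ov\Ga$ provided by Lemma \ref{sp}, and take the limit $\ve\to 0+$.

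First I would add (\ref{Gint}) and (\ref{Gints}): since $\theta(t)+\theta(-t)=1$ a.e.\ in $t$, the sum reconstructs $\psi(t)$ as an $L^2$-valued distribution in $t$. Testing against an arbitrary $\phi\in C_c^\infty(\R)$ and applying Fubini, justified by the Paley--Wiener decay of $\hat\phi(\om):=\int_\R e^{i\om t}\phi(t)\,dt$ on horizontal lines combined with Lemma \ref{sp}\,iv) at $k=0$ (which yields the uniform bound $\|\cR_0(\om)\|_{\cL(L^2_\si,L^2_{-\si})}=\cO(1)$ for $|\rIm\om|\le\ve$ and $\si>1/2$), I obtain
$$\int_\R\phi(t)\psi(t)\,dt=\frac{1}{2\pi i}\int_{\R+i\ve}\hat\phi(\om)\cR_0(\om)\psi_0\,d\om-\frac{1}{2\pi i}\int_{\R-i\ve}\hat\phi(\om)\cR_0(\om)\psi_0\,d\om.$$

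Next I would deform each line onto the spectrum. The upper line $\R+i\ve$ is contracted downward to a contour running along $\Ga$ at height $+i0$ together with the real segment $[-m,m]$; the lower line is contracted upward analogously. This is permitted by Lemma \ref{sp}\,i), which supplies the analyticity of $\cR_0(\om)\psi_0$ on $\C\setminus\ov\Ga$ with values in $L^2_{-\si}$; the closing vertical arcs at $\rRe\om=\pm R$ vanish as $R\to\infty$ by the Schwartz decay of $\hat\phi$ and Lemma \ref{sp}\,iv). Taking $\ve\to 0+$ and invoking Lemma \ref{sp}\,ii) to identify the integrands with the boundary values $\cR_0(\om\pm i0)$ on $\Ga$, and observing that the contributions over $[-m,m]$ agree for the upper and lower deformed contours (since $\cR_0$ is analytic across the interval $(-m,m)$) and therefore cancel in the subtraction, I arrive at
$$\int_\R\phi(t)\psi(t)\,dt=\frac{1}{2\pi i}\int_\Ga \hat\phi(\om)\bigl[\cR_0(\om+i0)-\cR_0(\om-i0)\bigr]\psi_0\,d\om,$$
which is precisely the distributional pairing of (\ref{Gint1}) with $\phi$.

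The main obstacle is the behaviour of $\cR_0(\om\pm i0)$ at the edge points $\om=\pm m$: the expansion (\ref{expR0}) exhibits a $(\om\mp m)^{-1/2}$ singularity there, and the boundary-value convergence in Lemma \ref{sp}\,ii) is only uniform on $|\om|\ge m+r$. I would handle both issues at once by indenting the deformed contours around $\pm m$ by small semicircles of radius $\delta$: the integral over each such semicircle is $\cO(\delta^{1/2})\to 0$, since the leading coefficient $\cA_0^\pm$ is bounded $L^2_\si\to L^2_{-\si}$ already for $\si>1/2$. The same observation guarantees that the resulting $\Ga$-integral is absolutely convergent against the rapidly decaying factor $\hat\phi$, closing the argument.
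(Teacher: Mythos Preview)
Your proposal is correct and follows precisely the same route as the paper: sum (\ref{Gint}) and (\ref{Gints}), apply the Cauchy theorem to deform the contours onto $\Gamma$, and pass to the limit $\ve\to0+$ using Lemma \ref{sp}. The paper compresses all of this into a single sentence, whereas you have carefully spelled out the distributional pairing, the cancellation over $[-m,m]$, and the treatment of the integrable $(\om\mp m)^{-1/2}$ singularities at the thresholds---details the paper leaves implicit.
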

\begin{proof}
Summing up the representations  (\ref{Gint}) and  (\ref{Gints}), and sending $\ve\to 0+$,
we obtain  (\ref{Gint1}) by the Cauchy theorem and Lemma \re{sp}.
\end{proof}
\subsection{Time decay}\la{Tdec}
Here we prove the time decay (\ref{full}) for the free Dirac
equation (\re{Dir0}).
Let $G(t)={\rm Op}[G(x-y,t)]$, where $G(x,t)$
is the fundamental solution to the Klein-Gordon operator
$\pa^2_t-\pa_x^2+m^2$:
$$
G(x,t)=\frac 12\theta(t-|x|)J_0(m\sqrt{t^2-x^2})
$$
where $J_0$ is the Bessel function.
Since
$$
(\pa_t+\al\pa_x-im\beta)(\pa_t-\al\pa_x+im\beta)=\pa^2_t-\pa^2_x+m^2
$$
then for the integral kernel $\cU(x-y,t)$ of the operator $\cU(t)$ we have
\be\la{UG}
\cU(x,t)=(\pa_t+\al\pa_x-im\beta)G(x,t)
\ee
The asymptotics of the  Bessel functions \cite{W} imply
$$
G(x,t)\sim t^{-1/2},\quad \pa_t G(x,t)\sim t^{-1/2},
\quad \pa_x G(x,t)\sim t^{-1/2}, \quad x\in\R, \quad t\to\infty
$$
Hence  (\re{UG}) implies
$$
\cU(x,t)\sim t^{-1/2},\quad x\in\R,\quad t\to\infty
$$
and then the free group $\cU(t)$ decays like $t^{-1/2}$.
The slow decay is caused by the presence of resonance at the edge
points $\zeta=\pm m$ of the continuous spectrum.
\\
In \cite{1dkg} we obtain similarly (\re{Gint1}) the integral representation
for the operator $G(t)$:
\be\la{Gint2}
G(t)\psi_0=\frac 1{2\pi }\int\limits_\Gamma e^{-i\om t}
\Big[R _0((\om+i0)^2-m^2)-R _0((\om-i0)^2-m^2)\Big]\psi_0~ d\om
\ee
for $t\in\R$ and $\psi_0\in L^2_\si$ with $\si>1/2$.
Let us introduce the   {\it low energy} and
{\it high energy} components of  $G(t)$ and $\cU(t)$:
\beqn\label{Gl}
G_l(t)&=&\frac 1{2\pi}\int\limits_\Gamma e^{-i\om t}l(\om)
  \Big[R_0((\om+i0)^2-m^2)-R_0((\om-i0)^2-m^2)\Big]~ d\om\\
\label{Gh}
G_h(t)&=&\frac 1{2\pi}\int\limits_\Gamma e^{-i\om t}h(\om)
  \Big[R_0((\om+i0)^2-m^2)-R_0((\om-i0)^2-m^2)\Big]~ d\om\\
\label{Ul}
\cU_l(t)&=&\frac 1{2\pi i}\int\limits_\Gamma e^{-i\om t}l(\om)
  \Big[\cR _0(\om+i0)-\cR _0(\om-i0)\Big]~ d\om\\
\label{Uh}
\cU_h(t)&=&\frac 1{2\pi i}\int\limits_\Gamma e^{-i\om t}h(\om)
  \Big[\cR _0(\om+i0)-\cR _0(\om-i0)\Big]~ d\om
\eeqn
where
$l(\om)\in C_0^\infty(\R)$ is an even function,
$\supp l\in [-m-2\ve, m+2\ve]$, $l(\om)=1$ if $|\om|\le m+\ve$,
and $h(\om)=1-l(\om)$.
In \cite{1dkg} we have proved that $G_h(t)$ decays like $t^{-3/2}$.
Here we will prove that $\cU_h(t)$ also decays like $t^{-3/2}$.
\begin{theorem}\la{TD}
Let $\sigma>5/2$. Then the decay holds
\begin{equation}\label{Gb1}
\Vert\cU_h(t)\Vert_{\cL(L^2_\si;L^2_{-\si})}\le C(1+|t|)^{-3/2},\quad t\in\R
\end{equation}
\end{theorem}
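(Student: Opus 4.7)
The plan is to reduce the decay of $\cU_h(t)$ to that of the scalar high-energy Klein--Gordon propagator $G_h(t)$, whose $(1+|t|)^{-3/2}$ bound has already been established in \cite{1dkg}. The bridge is provided by the algebraic factorization (\ref{R0R0}) of the Dirac resolvent in terms of the Schr\"odinger resolvent.

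Substituting (\ref{R0R0}) into (\ref{Uh}), the $x$-differential polynomial $i\alpha\partial_x+m\beta$ is independent of $\omega$ and commutes past the $\pm i0$ limits, so it may be pulled outside the contour integral. The remaining scalar factor $\omega$ is rewritten through $\omega e^{-i\omega t} = i\partial_t e^{-i\omega t}$ and becomes a time derivative. Accounting for the $(2\pi i)^{-1}$ prefactor in (\ref{Uh}) versus $(2\pi)^{-1}$ in (\ref{Gh}), this yields the operator identity
$$
\cU_h(t) \;=\; (\partial_t+\alpha\partial_x-im\beta)\,G_h(t),
$$
which is the spectral counterpart of (\ref{UG}). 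The zeroth-order term $-im\beta\,G_h(t)$ inherits the required decay directly from \cite{1dkg}.

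For the derivative contributions $\partial_t G_h(t)$ and $\alpha\partial_x G_h(t)$ I would rerun the integration-by-parts / stationary-phase argument of \cite{1dkg} with an additional factor $\omega$ or $\partial_x$ inserted in the $\omega$-integrand. The cutoff $h$ vanishes on a neighbourhood of $\pm m$, so no zero-energy resonance contribution arises; the spatial derivative is absorbed by the smoothing property $R_0:L^2_\sigma\to H^2_{-\sigma}$ coming from (\ref{A}) composed with the embedding $H^1_{-\sigma}\hookrightarrow L^2_{-\sigma}$. The principal obstacle is the high-energy tail along the unbounded portion $\Gamma\cap\supp h$ once the extra factor $\omega$ is present in the $\partial_t G_h$ term: a naive operator-norm estimate gives no decay as $|\omega|\to\infty$. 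This is overcome by two integrations by parts in $\omega$, which extract $t^{-2}$ out of the phase $e^{-i\omega t}$, while the $\omega$-derivatives fall on $h(\omega)\omega\bigl[R_0((\omega+i0)^2-m^2)-R_0((\omega-i0)^2-m^2)\bigr]$; after applying the chain rule $\partial_\omega R_0(\omega^2-m^2)=2\omega R_0'(\omega^2-m^2)$, the resulting integrand decays at least like $|\omega|^{-2}$ by (\ref{A}) with $k=1,2$, a regularity that is available precisely because $\sigma>5/2$. The derivative terms therefore satisfy the same $(1+|t|)^{-3/2}$ bound as $G_h$, which completes the proof.
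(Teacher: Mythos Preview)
Your reduction via the identity $\cU_h(t)=(\partial_t+\alpha\partial_x-im\beta)G_h(t)$ is exactly what the paper does. The difference is that the paper does not re-derive the bounds on the derivative pieces: it simply invokes \cite[Theorem 2.7]{1dkg}, which is stated as
\[
\Vert G_h(t)\Vert_{\cL(H^0_\sigma;H^1_{-\sigma})}+\Vert\partial_t G_h(t)\Vert_{\cL(L^2_\sigma;L^2_{-\sigma})}\le C(1+|t|)^{-3/2},
\]
so the $H^1_{-\sigma}$ target already absorbs $\alpha\partial_x$, and the $\partial_t G_h$ bound is given separately. Nothing beyond this citation is needed.

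Your sketch of how one would reprove those derivative bounds, however, contains a gap. You claim that after two integrations by parts in $\omega$ the integrand for $\partial_t G_h$ decays like $|\omega|^{-2}$ ``by (\ref{A}) with $k=1,2$''. This is not what (\ref{A}) gives: the dominant term after the chain rule is $\omega\cdot 4\omega^2 R_0''(\omega^2-m^2)$, and since $\Vert R_0''(\zeta)\Vert=\cO(|\zeta|^{-3/2})$ one obtains only $\cO(1)$ in $\cL(L^2_\sigma,L^2_{-\sigma})$, which is not integrable over $\Gamma$. (Even for $G_h$ itself the analogous computation yields only $\cO(|\omega|^{-1})$, also non-integrable.) The proof in \cite{1dkg} does not proceed by naive operator-norm bounds on $R_0^{(k)}$; it uses the explicit kernel of the free resolvent and an oscillatory-integral/stationary-phase analysis in which the phase $e^{-i\omega t}e^{i\sqrt{\omega^2-m^2}\,|x-y|}$ is treated as a whole. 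So while your overall strategy is correct and matches the paper, the high-energy justification you outline would fail as written; the cleanest fix is to do what the paper does and cite \cite[Theorem 2.7]{1dkg} directly.
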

\begin{proof}
First, taking into account (\ref{UG}), (\ref{Gh}), (\ref{Uh})
we obtain that
\be\la{GUh}
\cU_h(t)=(\pa_t+\al\pa_x-im\beta)G_h(t)
\ee
Further, \cite[Theorem 2.7]{1dkg} implies that for  $\sigma>5/2$
\be\la{G-dec}
\Vert G_h(t)\Vert_{\cL(H^0_\si;H^1_{-\si})}+
\Vert\pa_t G_h(t)\Vert_{\cL(L^2_\si;L^2_{-\si})}\le C(1+|t|)^{-3/2}),
\quad t\in\R
\ee
Then (\re{GUh}) and (\re{G-dec}) imply  (\ref{Gb1}).
\end{proof}
\setcounter{equation}{0}
\section{Perturbed  equation}\la{pKG}
To prove the long time decay for the perturbed  equation,
we first establish the spectral properties of its generator ${\cal H}$.
\subsection{Spectral properties}
Similarly to  \ci[formula (3.1)]{M},
let us introduce a generalized eigenspaces $\bf M^{\pm}$
of  the  operator $\cH$:
$$
{\bf M^{\pm}}=\{\psi\in H^1_{-1/2-0}:
\,~(1+\cA_1^{\pm}\cV)\psi\in\Re(\cA_{0}^{\pm}),\quad
\cA_{0}^{\pm}\cV\psi=0\}
$$
Where $\cA_{0}^{\pm}$ and $\cA_1^{\pm}$ are defined in (\ref{cAA}),
$\Re(\cA_{0}^{\pm})$ is the range of $\cA_{0}^{\pm}$.
Below we assume that
\be\la{SC}
 {\bf M^{\pm}}=0
\ee
Denote by $\cR(\om)=(\cH-\om)^{-1}$, $\om\in\C\setminus\Gamma$,
the resolvents of the  operators $\cH$.
Next Lemma is  the vector version of \ci[Theorem 7.2]{M}.
\begin{lemma}\la{R-b}
Let the  conditions (\ref{V}) and (\ref{SC}) hold.
Then the families
$\{\cR(\pm m+\ve): \pm m+\ve\in\C\setminus\ov\Gamma, |\ve|<\de\}$
are bounded in the operator norm
of ${\cal L}(L^2_{\si},L^2_{-\si})$ for any $\si>3/2$ and
sufficiently small $\de$.
\end{lemma}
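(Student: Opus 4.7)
The plan is to reduce the statement to a Schur-complement (Grushin) inversion for the operator
\[
T(\om) := 1 + \cR_0(\om)\cV.
\]
The second resolvent identity $\cR(\om)=\cR_0(\om)-\cR_0(\om)\cV\cR(\om)$ gives $\cR(\om) = T(\om)^{-1}\cR_0(\om)$, so uniform boundedness of $\cR(\om)$ in $\cL(L^2_\si,L^2_{-\si})$ follows once I control $T(\om)^{-1}$ uniformly in $\cL(L^2_{-\si},L^2_{-\si})$ for $3/2<\si<\beta/2$. The decay hypothesis (\ref{V}) with $\beta>5$ makes this range of $\si$ nonempty and renders $\cV$ bounded from $L^2_{-\si}$ into $L^2_\si$, so that $\cR_0(\om)\cV$ is bounded on $L^2_{-\si}$ by Lemma \ref{sp}.

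Inserting the threshold expansion (\ref{expR0}) into $T(\om)$ near $\om=\pm m$ produces
\[
T(\om) = (1+\cA_1^{\pm}\cV) \;+\; (\om\mp m)^{-1/2}\,\cA_0^{\pm}\cV \;+\; O\bigl((\om\mp m)^{1/2}\bigr)
\]
in $\cL(L^2_{-\si},L^2_{-\si})$. Because the integral kernel of $\cA_0^{\pm}$ is a constant $2\times 2$ matrix, the operator $\cA_0^{\pm}\cV$ has rank at most two: its range $\Re(\cA_0^{\pm})$ is two-dimensional, and its kernel is controlled by the single vector-valued linear functional $f\mapsto \int\cV(y)f(y)\,dy$. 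I would then perform a Feshbach/Schur reduction with respect to the decomposition of $L^2_{-\si}$ induced by these two-dimensional subspaces. In block form the singular term lives entirely in one row, so invertibility of $T(\om)$ reduces to invertibility of a $2\times 2$ matrix $M(\om)$ built from $\cA_0^{\pm}$, $\cA_1^{\pm}$, $\cV$ and $(1+\cA_1^{\pm}\cV)^{-1}$ restricted to the complement, with $M(\om)$ depending analytically on $(\om\mp m)^{1/2}$.

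The condition $\mathbf{M}^{\pm}=0$ from (\ref{SC}) is exactly what guarantees that $\det M(\om)$ does not vanish as $\om\to\pm m$: a nonzero vector in the kernel of the limiting matrix would, after unwinding the Schur reduction, yield $\psi\in H^1_{-1/2-0}$ with $(1+\cA_1^{\pm}\cV)\psi\in\Re(\cA_0^{\pm})$ and $\cA_0^{\pm}\cV\psi=0$, i.e.\ a nonzero element of $\mathbf{M}^{\pm}$. Once $M(\om)^{-1}$ is uniformly bounded for small $|\om\mp m|$, the standard Schur formula expresses $T(\om)^{-1}$ as a sum of operators whose norms stay bounded, and composing with $\cR_0(\om)$ yields the claimed bound on $\cR(\om)$.

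The main obstacle is the algebraic bookkeeping in the Schur step. Since the singular direction carries the factor $(\om\mp m)^{-1/2}$, one must verify that the inverse $M(\om)^{-1}$ absorbs this singularity cleanly: after unfolding, the factors $(\om\mp m)^{\pm 1/2}$ have to cancel in the off-diagonal blocks so that $T(\om)^{-1}$ remains bounded rather than singular, and the equivalence between the kernel of the limiting $M$ and the resonance space $\mathbf{M}^{\pm}$ must be verified explicitly using the precise form of $\cA_0^{\pm}$, $\cA_1^{\pm}$ in Lemma \ref{sp}. This is the vector analogue of the reorganisation carried out by Murata in \cite[Theorem 7.2]{M}, and is where the two-dimensional nature of $\Re(\cA_0^{\pm})$ (as opposed to the rank-one case for scalar Schr\"odinger) forces the argument to be done with $2\times 2$ matrices throughout.
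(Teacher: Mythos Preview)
The paper does not prove this lemma: it simply declares it to be ``the vector version of \cite[Theorem~7.2]{M}'' and proceeds. Your Schur/Grushin reduction is essentially the mechanism behind Murata's theorem (and the Jensen--Nenciu variant \cite{JN2001}), so you are supplying what the paper only cites.

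Two corrections. First, the constant matrix $\bigl(\begin{smallmatrix}\pm1&1\\1&\pm1\end{smallmatrix}\bigr)$ in the kernel of $\cA_0^{\pm}$ has determinant zero, so $\Re(\cA_0^{\pm})$ is \emph{one}-dimensional, not two; the Schur complement $M(\om)$ is scalar, and the vector case is no harder here than scalar Schr\"odinger, contrary to your last sentence. Second, and more important, uniform boundedness of $T(\om)^{-1}$ on $L^2_{-\si}$ does \emph{not} by itself give boundedness of $\cR(\om)=T(\om)^{-1}\cR_0(\om)$, since $\cR_0(\om)$ blows up like $(\om\mp m)^{-1/2}$ at threshold. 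You need the sharper statement $T(\om)^{-1}\cA_0^{\pm}=O\bigl((\om\mp m)^{1/2}\bigr)$. This does emerge from the Schur formula once the $(\om\mp m)^{-1/2}$ factor is tracked through the off-diagonal blocks, but it is a separate conclusion from ``$T(\om)^{-1}$ is bounded,'' and your sentence ``composing with $\cR_0(\om)$ yields the claimed bound'' glosses over it. (Note that the paper derives precisely this vanishing in (\ref{A00}), but \emph{a~posteriori} from Lemma~\ref{R-b}, so it cannot be quoted in a proof of Lemma~\ref{R-b}.)
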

Asymptotics (\ref{expR0}) and lemma \re{R-b} imply
\begin{pro}\la{R-exp}
Let the  conditions (\ref{V}) and (\ref{SC}) hold.
Then  the asymptotics hold
\be\la{Rexp}
\left.\ba{lll}
\cR(\om)-\cR(\pm m)=\cO(|\om\mp m|^{1/2})\\
\cR'(\om)=\cO(|\om\mp m|^{-1/2})\\
\cR''(\om)=\cO(|\om\mp m|^{-3/2})\ea\right|
~~\om\to\pm m,\quad\om\in\C\setminus\ov\Gamma
\ee
in ${\cL (L^2_{\si},L^2_{-\si})}$ with $\si>5/2$.
\end{pro}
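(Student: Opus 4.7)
The plan is to combine the asymptotic expansion \re{expR0} of the free resolvent near $\pm m$ with the threshold boundedness of $\cR(\om)$ given by Lemma \re{R-b}, transferring the singular structure via both forms of the resolvent identity $\cR(\om) = \cR_0(\om)[I - \cV\cR(\om)] = [I - \cR(\om)\cV]\cR_0(\om)$. Substituting $\cR_0(\om) = \cA_0^\pm(\om\mp m)^{-1/2} + \cA_1^\pm + \cO((\om\mp m)^{1/2})$ and using that $\cR(\om)$ stays bounded as $\om\to\pm m$, the coefficient of the $(\om\mp m)^{-1/2}$ singularity must vanish, yielding the \emph{threshold identities}
$$\cA_0^\pm[I - \cV\cR(\pm m)] = 0,\qquad [I - \cR(\pm m)\cV]\cA_0^\pm = 0.$$

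Next I would use the algebraic identity $(I + \cV\cR_0(\om))^{-1} = I - \cV\cR(\om)$, verified directly from $\cR_0 - \cR = \cR_0\cV\cR$, to differentiate $(I + \cR_0(\om)\cV)\cR(\om) = \cR_0(\om)$ and obtain the symmetric formula
$$\cR'(\om) = [I - \cR(\om)\cV]\,\cR_0'(\om)\,[I - \cV\cR(\om)].$$
Subtracting the threshold identity $[I - \cR(\pm m)\cV]\cA_0^\pm = 0$ gives $[I - \cR(\om)\cV]\cA_0^\pm = -\Delta\cR(\om)\,\cV\cA_0^\pm$ with $\Delta\cR(\om) := \cR(\om) - \cR(\pm m)$, and a second application from the right produces the quadratic cancellation
$$[I - \cR(\om)\cV]\cA_0^\pm[I - \cV\cR(\om)] = \Delta\cR(\om)\,\cV\cA_0^\pm\cV\,\Delta\cR(\om).$$
Granting the first (H\"older) asymptotic $\Delta\cR(\om) = \cO(|\om\mp m|^{1/2})$, substituting $\cR_0'(\om) = -\tfrac12\cA_0^\pm(\om\mp m)^{-3/2} + \cO(|\om\mp m|^{-1/2})$ into the symmetric formula then gives $\cR'(\om) = \cO(|\om\mp m|^{-1/2})$, because $\Delta\cR^2 = \cO(|\om\mp m|)$ absorbs the leading $(\om\mp m)^{-3/2}$ singularity. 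A parallel differentiation of the symmetric formula, together with the bound on $\cR_0''(\om)$ from \re{expR0} and the already-obtained estimate on $\cR'$, yields $\cR''(\om) = \cO(|\om\mp m|^{-3/2})$.

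The main obstacle is therefore the first (H\"older) asymptotic, which cannot be obtained circularly by integrating the derivative bound: with only $\Delta\cR = \cO(1)$ from Lemma \re{R-b}, the symmetric formula gives $\cR' = \cO(|\om\mp m|^{-3/2})$, not integrable down to the threshold. My approach is to apply the two-point identity $\cR(\om) - \cR(\om') = [I - \cR(\om')\cV][\cR_0(\om) - \cR_0(\om')][I - \cV\cR(\om)]$, pass to the limit $\om'\to\pm m$, and evaluate the $0\cdot\infty$ product arising from $[I - \cR(\om')\cV]\cA_0^\pm = -\Delta\cR(\om')\cV\cA_0^\pm \to 0$ against the divergent $\cA_0^\pm[(\om\mp m)^{-1/2} - (\om'\mp m)^{-1/2}]$ factor. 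This produces an implicit equation of the form $[I + C\,\cV\cA_0^\pm\cV]\Delta\cR(\om) = \cO(|\om\mp m|^{1/2})$ whose solvability in $\cL(L^2_\si,L^2_{-\si})$ with $\si > 5/2$ relies on the finite-rank structure of $\cA_0^\pm$ together with the nonsingular condition $\mathbf{M}^\pm = 0$ that already entered Lemma \re{R-b}.
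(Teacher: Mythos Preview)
Your symmetric derivative formula $\cR'(\om)=[I-\cR(\om)\cV]\,\cR_0'(\om)\,[I-\cV\cR(\om)]$ and the threshold identities are exactly the ingredients the paper uses, but you have introduced an artificial circularity by extracting only the \emph{limit} $[I-\cR(\pm m)\cV]\cA_0^\pm=0$. The paper's key observation is that the full rate
\[
[I-\cR(\om)\cV]\,\cA_0^\pm=\cO(|\om\mp m|^{1/2}),\qquad
\cA_0^\pm\,[I-\cV\cR(\om)]=\cO(|\om\mp m|^{1/2})
\]
follows \emph{directly} from Lemma~\re{R-b} and the expansion \re{expR0}: write $\cR(\om)=[I-\cR(\om)\cV]\bigl(\cA_0^\pm(\om\mp m)^{-1/2}+\cO(1)\bigr)$; since both $\cR(\om)$ and $I-\cR(\om)\cV$ are bounded near $\pm m$, the singular piece must itself be $\cO(1)$, which is exactly the displayed estimate. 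No appeal to $\Delta\cR$ is needed.

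The second point you miss is how to turn two one-sided $\cO(|\om\mp m|^{1/2})$ bounds into a single $\cO(|\om\mp m|)$ bound for the sandwich $[I-\cR(\om)\cV]\cA_0^\pm[I-\cV\cR(\om)]$. The paper exploits the fact that $\cA_0^\pm$ has a \emph{constant} matrix integral kernel: its range consists of constant $\C^2$-valued functions, and it acts by integration. Thus the left estimate becomes $\Vert[I-\cR(\om)\cV][1]\Vert_{L^2_{-\si}}=\cO(|\om\mp m|^{1/2})$ on constants, the right estimate becomes $\int[(I-\cV\cR(\om))f](x)\,dx=\cO(|\om\mp m|^{1/2})$, and the sandwich factors through their product, giving $\cO(|\om\mp m|)$. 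Plugging this into the symmetric formula with $\cR_0'(\om)=-\tfrac12\cA_0^\pm(\om\mp m)^{-3/2}+\cO(|\om\mp m|^{-1/2})$ immediately yields $\cR'(\om)=\cO(|\om\mp m|^{-1/2})$; the estimate for $\cR''$ follows from the analogous differentiated identity, and the H\"older bound $\cR(\om)-\cR(\pm m)=\cO(|\om\mp m|^{1/2})$ is obtained \emph{last}, by integrating $\cR'$. Your proposed detour through a two-point identity and an implicit equation for $\Delta\cR$ is unnecessary, and as written the $0\cdot\infty$ limit and the asserted solvability of $[I+C\,\cV\cA_0^\pm\cV]\Delta\cR=\cO(|\om\mp m|^{1/2})$ are not justified.
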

\begin{proof}
Lemma \re{R-b} implies that
for any $\si>3/2$  the operators
$(1+\cR_0(\om)\cV)^{-1}=1-\cR(\om)\cV$
and $(1+\cV\cR_0(\om))^{-1}=1-\cV\cR(\om)$  are bounded
in ${\cal L}(L^2_{-\si},L^2_{-\si})$ and in
${\cal L}(L^2_{\si},L^2_{\si})$ respectively
for $|\om\mp m|<\de$, $\om\in\C\setminus\ov\Ga$
with $\de$ sufficiently small.
Asymptotics (\ref{expR0}) imply
$$
\!\left.\ba{ll}
\cR(\om)=\big(1+\cR_0(\om)\cV\big)^{-1}\cR_0(\om)
=\big(1+\cR_0(\om)\cV\big)^{-1}\big(\cA_0^{\pm}
\ds\frac{1}{\sqrt{\om\mp m}}+{\cal O}(1)\big)\\\\
\cR(\om)=\cR_0(\om)\big(1+\cV\cR_0(\om)\big)^{-1}
=\big(A_0^{\pm}\ds\frac{1}{\sqrt{\om\mp m}}+{\cal O}(1)\big)
\big(1+\cV\cR_0(\om)\big)^{-1}\ea\!\right|
~\om\!\to\pm m, ~~\om\in\C\setminus\ov\Ga
$$
in ${\cal L}(L^2_{\si}, L^2_{-\si})$ with $\si>3/2$.
Hence, the boundedness $\cR(\om)$, $(1+\cR_0(\om)\cV)^{-1}$
and $(1+\cV\cR_0(\om))^{-1}$ at the points $\om=\pm m$
in corresponding norms imply that
\be\la{A00}
\!(1+\cR_0(\om)\cV)^{-1}\cA_0^{\pm}\!={\cal O}(\sqrt{\om\mp m}),\quad
\cA_0^{\pm}(1+\cV\cR_0(\om))^{-1}\!={\cal O}(\sqrt{\om\mp m}),~~\om\!\to 0,
~~\om\in\C\setminus\ov\Ga
\ee
in ${\cal L}(L^2_{\si},L^2_{-\si})$ with $\si>3/2$.
Therefore,
\be\la{G00}
\Vert(1+\cR_0(\om)V)^{-1}[1]\Vert_{L^2_{-\si}}={\cal O}(\sqrt{\om\mp m}),
\quad\om\to\pm m, \quad\om\in\C\setminus\ov\Ga,\quad\si>3/2
\ee
and  for any $f\in L^2_{\si}$ with $\si>3/2$
\be\la{G01}
\int [(1+\cV\cR_0(\om))^{-1}f](x)dx={\cal O}(\sqrt{\om\mp m}),
\quad\om\to\pm m, \quad\om\in\C\setminus\ov\Ga,\quad\si>3/2
\ee
Taking into account the identities
$$
\cR'=(1+\cR_0\cV)^{-1}\cR'_0(1+\cV\cR_0)^{-1},\quad
\cR''=\Big[(1+\cR_0\cV)^{-1}\cR''_0
-2\cR'\cV\cR'_0\Big](1+\cV\cR_0)^{-1}
$$
we obtain from  (\ref{G00})-(\ref{G01}) the asymptotics
(\re{Rexp}) for  $\cR'(\om)$ and  $\cR''(\om)$.
Finally, the asymptotics (\re{Rexp})  for $\cR(\om)$ follow by
integration the asymptotics  (\re{Rexp}) for $\cR'(\om)$.
\end{proof}
To obtain other properties of $\cR(\om)$ we express $\cR(\om)$ in
the resolvent of a matrix Schr\"odinger operator.
First, we introduce the operator
$\tilde\cH$ similar to $\cH$ with matrix potential $\tilde\cV(x)$
satisfying $\tilde\cV_{11}(x)=\tilde\cV_{22}(x)=0$.
Namely, let us introduce the matrix of gauge transformation
\be\la{C}
C=\left(\begin{array}{cc}
 \exp\big(\!-i\!\!\int\limits_{-\infty}^x\!\!{\cal V}_{11}(y)dy\big) &   0
  \\
  0   &   \exp\big(i\!\!\int\limits_{-\infty}^x\!\!{\cal V}_{22}(y)dy\big)
\end{array}\right)
\ee
This matrix function is bounded by the  conditions (\re{Vij}), and
\be\la{CHC}
C^{-1}\cH C=\tilde\cH
\ee
where
\be\la{tH}
\tilde\cH:=
i\al\pa_x+\tilde {\cal V},\quad\tilde {\cal V}=
\left(\begin{array}{cc}
      0         &    \tilde {\cal V}_{12}
  \\
 \tilde {\cal V}_{21}             &   0
\end{array}\right)
\ee
with
\be\la{tV}
\tilde {\cal V}_{12}(x)=
\exp\Big(i\!\!\int\limits_{-\infty}^x\!\!({\cal V}_{11}(y)
+{\cal V}_{22}(y))dy\Big)({\cal V}_{12}(x)+m),\quad
\tilde {\cal V}_{21}(x)=\ov{\tilde {\cal V}_{12}(x)}
\ee
where we have used the conditions (\re{Vij}).
By (\re{CHC}) the spectral properties of $\cH$ and $\tilde\cH$
are identical. Further, we have
\beqn\nonumber
&&\!\!\!\!\!\!\!(\tilde\cH-\omega)(\tilde\cH+\omega)
=(i\al\pa_x+\tilde \cV(x)-\om)(i\al\pa_x+\tilde {\cal V}(x)+\om)\\
\la{ID}\\
\nonumber
&&\!\!\!\!\!\!\!\!\!\!=\!\left(\!\!\ba{cc}
-\pa^2_x+m^2\!-\om^2\!+2m\rRe{\cal V}_{12}+|{\cal V}_{12}|^2
&  -i\tilde {\cal V}_{12}'\\
i\tilde {\cal V}_{21}'
& -\pa^2_x+m^2\!-\om^2\!+2m\rRe{\cal V}_{12}+|{\cal V}_{12}|^2\ea\!\!\right)
\!={\bf H}-\!(\om^2\!-m^2)
\eeqn
Here  ${\bf H}={\bf H}_0+V$ is the matrix  Schr\"odinger operator with
\be\la{qq}
{\bf H}_0=\left(\ba{cc}
-\pa^2_x &   0\\
0        & -\pa^2_x
\ea\right),\quad
V=\left(\ba{cc}
2m\rRe{\cal V}_{12}+|{\cal V}_{12}|^2 &  -i\tilde {\cal V}_{12}'\\
i\tilde {\cal V}_{21}' & 2m\rRe{\cal V}_{12}+|{\cal V}_{12}|^2
\ea\right)
\ee
\begin{remark}\la{do}
Due (\re{Vij}), 
the  perturbation $V$ is Hermitian and  does not contain 
differential operators. This fact is essential for obtaining high energy decay
 (\re{vp1}), (\re{uv}), and (\re{lins3}).
\end{remark}
Formula (\re{ID}) implies the following representation for the resolvent
$\tilde\cR (\omega)=(\tilde\cH-\omega)^{-1}$:
\be\la{calR}
\tilde\cR (\omega)=(\tilde\cH+\omega)R(\om^2-m^2)=
(i\al\pa_x+\tilde {\cal V}(x)+\om)R(\om^2-m^2)
\ee
where $R(\zeta)=({\bf H}-\zeta)^{-1}$, $\zeta\in\C\setminus[0,\infty)$,
the resolvent of the  operators ${\bf H}$.

Due to (\re {CHC}) we have $\tilde\cR (\omega)=C\cR (\omega)C^{-1}$.
The resolvent $\tilde\cR (\omega)$
admits the low energy asymptotics of type (\re{Rexp}) since
 matrix function (\re{C}) is bounded. Namely
\be\la{tRexp}
\left.\ba{lll}
\tilde\cR(\om)-\tilde\cR(\pm m)=\cO(|\om\mp m|^{1/2})\\
\tilde\cR'(\om)=\cO(|\om\mp m|^{-1/2})\\
\tilde\cR''(\om)=\cO(|\om\mp m|^{-3/2})\ea\right|
~~\om\to\pm m,\quad\om\in\C\setminus\ov\Gamma
\ee
in ${\cL (L^2_{\si},L^2_{-\si})}$ with $\si>5/2$.

Below we need the limiting absorption  principle and 
high energy decay for the resolvent $\ti\cR(\om)$.
First, we
obtain these properties for the resolvent $R(\zeta)$ of the matrix Schr\"odinger 
operator ${\bf H}$.
\begin{lemma}\la{pR}
Let the  conditions (\ref{V}) and (\ref{SC}) hold.  Then\\
i) $R(\zeta)$ is  meromorphic function of $\zeta\in\C\setminus[0,\infty)$.
\\
ii) For $\zeta>0$, the convergence (limiting absorption principle) holds
\be\la{LAP2}
R(\zeta\pm i\ve)\to R(\zeta\pm i0),\quad\ve\to 0+
\ee
in $\cL(H^{0}_\si,H^2_{-\si})$ with $\si>1/2$.
\\
iii) For $l=0,1$, the asymptotics hold
\be\la{AR}
\Vert R^{(k)}_j(\zeta)\Vert_{\cL (H^0_\si,H^{l}_{-\si})}
=\cO(|\zeta|^{-\fr{1-l+k}2}),\quad\zeta\to\infty,\quad
\zeta\in\C\setminus[0,\infty)
\ee
with $\si>1/2+k$ for  $k=0,1,2$.
\end{lemma}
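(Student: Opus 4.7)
The plan is to derive all three parts from the free-resolvent facts (i)--(iv) stated before Lemma 2.1, by means of the second resolvent identity
\[
R(\zeta)=(I+R_0(\zeta)V)^{-1}R_0(\zeta)=R_0(\zeta)(I+VR_0(\zeta))^{-1}.
\]
The decisive structural input, emphasised in Remark \ref{do}, is that $V$ is a Hermitian matrix multiplication operator of decay $\langle x\rangle^{-\beta}$ with $\beta>5$ and contains no derivatives, so that multiplication by $V$ is bounded from $L^2_{-\si}$ to $L^2_{\si}$ whenever $2\si<\beta$, and similarly on the corresponding $H^1$-spaces by the hypothesis on $\cV'_{12}$.

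For part (i) I would check that $R_0(\zeta)V$ is compact on $L^2_{-\si}$ for each $\zeta\in\C\setminus[0,\infty)$: $V$ sends $L^2_{-\si}$ into $L^2_{\si}$, then $R_0(\zeta)$ sends $L^2_{\si}$ into $H^2_{-\si}$ by property (i), and Rellich's theorem makes $H^2_{-\si}\hookrightarrow L^2_{-\si}$ compact. Analytic Fredholm theory applied to the holomorphic family $I+R_0(\zeta)V$ then gives meromorphy of $(I+R_0(\zeta)V)^{-1}$, and hence of $R(\zeta)$. For part (ii), property (ii) supplies norm-continuous boundary values $R_0(\zeta\pm i0)\in\cL(L^2_\si,H^2_{-\si})$ for $\zeta>0$, the same compactness argument persists in the limit, and the Fredholm alternative reduces invertibility of $I+VR_0(\zeta\pm i0)$ on $L^2_\si$ to triviality of its kernel. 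A nontrivial kernel element $f\in L^2_\si$ would produce $u=R_0(\zeta\pm i0)(-Vf)\in H^2_{-\si}$ solving $({\bf H}-\zeta)u=0$, i.e.\ an embedded eigenfunction or resonance of the matrix Schr\"odinger operator ${\bf H}$. For $\zeta>0$ such solutions are excluded by the Agmon--Kato absence-of-positive-eigenvalues theorem, valid verbatim for Hermitian matrix potentials of polynomial decay by Remark \ref{do}; for $\zeta=0$ such a $u$ lies in ${\bf M}^\pm$ and is excluded by the nonsingular assumption \eqref{SC}. Passing to the limit $\ve\to 0+$ in $R(\zeta\pm i\ve)=(I+R_0(\zeta\pm i\ve)V)^{-1}R_0(\zeta\pm i\ve)$ then yields \eqref{LAP2}.

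For part (iii), property (iv) with $k=l=0$ gives $\|R_0(\zeta)\|_{L^2_\si\to L^2_{-\si}}=O(|\zeta|^{-1/2})$, so $\|VR_0(\zeta)\|_{L^2_\si\to L^2_\si}\to 0$ and a Neumann series supplies $(I+VR_0(\zeta))^{-1}=O(1)$ uniformly on $L^2_\si$ for large $|\zeta|$. Writing $R=R_0(I+VR_0)^{-1}$ and applying \eqref{A} again with the desired $l$ yields the $k=0$ case. For $k=1,2$ I would differentiate the factorisation, using the identity
\[
R'(\zeta)=(I+R_0V)^{-1}R_0'(\zeta)(I+VR_0)^{-1}
\]
(which follows from $I-VR=(I+VR_0)^{-1}$), and similarly for $R''$; each differentiation contributes one extra factor of $|\zeta|^{-1/2}$ supplied by \eqref{A}, while each insertion of $V$ costs one unit in the weight exponent, which is exactly how the constraint $\si>1/2+k$ is inherited from property (iv). The main obstacle is the kernel-triviality step in (ii): excluding embedded positive eigenvalues and resonances of the matrix Schr\"odinger operator rests on a unique-continuation argument, which is valid precisely because, by Remark \ref{do}, $V$ is pure multiplication and contains no first-order terms.
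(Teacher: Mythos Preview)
Your approach coincides with the paper's: analytic Fredholm (Gohberg--Bleher) for (i), Agmon's limiting absorption principle together with absence of embedded eigenvalues for (ii), and the Jensen--Kato high-energy estimates for (iii). The paper simply cites these results where you unpack them.

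There is one technical slip in your compactness argument. The embedding $H^2_{-\si}(\R)\hookrightarrow L^2_{-\si}(\R)$ is \emph{not} compact: on the whole line Rellich requires a gain in the weight exponent as well as in regularity (translate a bump function to infinity, renormalised by the weight, to produce a bounded sequence in $H^2_{-\si}$ with no $L^2_{-\si}$-convergent subsequence). The repair is immediate, since $\beta>5$ leaves slack: $V$ in fact maps $L^2_{-\si}\to L^2_{\si+\epsilon}$ for some $\epsilon>0$, and for $\zeta\notin[0,\infty)$ the free resolvent has exponentially decaying kernel, hence maps $L^2_{\si+\epsilon}\to H^2_{\si+\epsilon}$; the embedding $H^2_{\si+\epsilon}\hookrightarrow L^2_{-\si}$ is then genuinely compact. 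Alternatively, argue on unweighted $L^2$ as the paper does: the kernel of $VR_0(\zeta)$ is $V(x)\,e^{i\sqrt\zeta\,|x-y|}/(2i\sqrt\zeta)$ with $\rIm\sqrt\zeta>0$, which is Hilbert--Schmidt because $V\in L^2$. A smaller point: in (ii) the statement concerns only $\zeta>0$, so the discussion of $\zeta=0$ and the link to ${\bf M}^\pm$ (which are eigenspaces for $\cH$, not ${\bf H}$) is not needed there; the paper excludes positive embedded eigenvalues by elementary 1D ODE theory, using $V\in L^1$, rather than by unique continuation.
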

\begin{proof}
{\it Step i)} The statement i) follows from Lemma \re{sp}-i), the Born
splitting
\be\la{Born}
R(\zeta)=R_0(\zeta)(1+VR_0(\zeta))^{-1}
\ee
and the  Gohberg-Bleher theorem \cite{B, GK}
since the operators $VR_0(\zeta)$ is a compact operator in $L^2$
for  $\zeta\in\C\setminus [0,\infty)$.\\
{\it Step ii)}
The convergence (\re{LAP2}) follows from the  vector version of
Agmon's theorem \cite[Theorem 3.3 and Lemma 4.2]{A} taking into
account the absence of embedded eigenvalues follows from
the theory of ordinary differential equations
since  $V\in L^1$.\\
{\it Step iii)}
The asymptotics (\re{AR}) follows from (\re{A}) by
the vector version of \cite[Theorem 9.2]{jeka}.
\end{proof}
Lemma \re{pR} and formula (\ref{calR}) imply 
\begin{lemma}\la{sp1}
Let  the  conditions (\ref{V}) and (\ref{SC}) hold.
Then
\\
i) $\tilde\cR(\om)$ is  meromorphic function of
$\om\in\C\setminus\ov\Ga$ with the values in $\cL(L^2,L^2)$;
\\
ii) For $\om\in\Ga$, the convergence (limiting absorption principle) holds
\be\la{tlap}
\tilde\cR(\om\pm i\ve)\to \tilde\cR(\om\pm i0),\quad\ve\to 0+
\ee
in $\cL(L^2_\si,L^2_{-\si})$ with $\si>1/2$;
\\
iii) For  $k=0,1,2$ and  $\si>1/2+k$ the asymptotics hold
\be\la{bR}
 \Vert\tilde\cR^{(k)}(\om)\Vert_{\cL (L^2_\si,L^2_{-\si})}=\cO(1),
 \quad |\om|\to\infty,\quad \om\in\C\setminus\Gamma
\ee
\end{lemma}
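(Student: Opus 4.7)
My plan is to deduce (i)--(iii) from the corresponding statements for the matrix Schr\"odinger resolvent $R(\zeta)$ in Lemma~\re{pR}, using the factorization (\re{calR}) together with the holomorphic change of variable $\zeta=\om^2-m^2$.

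For (i), the map $\om\mapsto\om^2-m^2$ sends $\C\setminus\ov\Ga$ holomorphically into $\C\setminus[0,\infty)$, since $\om^2-m^2\in[0,\infty)$ forces $\om\in\R$ with $|\om|\ge m$. Lemma~\re{pR}(i) then makes $R(\om^2-m^2)$ meromorphic in $\C\setminus\ov\Ga$ with values in $\cL(L^2,H^2)$, and the prefactor $i\al\pa_x+\tilde\cV(x)+\om$ is polynomial in $\om$ with coefficients bounded $H^2\to L^2$. Hence $\tilde\cR(\om)$ inherits meromorphicity.

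For (ii), for $\om\in\Ga$ one has $\om^2-m^2>0$, while $\rIm[(\om\pm i\ve)^2-m^2]=\pm 2\om\ve+O(\ve^2)$ carries a definite sign. Lemma~\re{pR}(ii) thus gives convergence of $R((\om\pm i\ve)^2-m^2)$ in $\cL(L^2_\si,H^2_{-\si})$ for $\si>1/2$, and composition with the prefactor (continuous in $\om$ and bounded $H^2_{-\si}\to L^2_{-\si}$) produces (\re{tlap}).

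For (iii), I would apply the Leibniz rule in $\om$ to (\re{calR}). Writing $A(\om):=i\al\pa_x+\tilde\cV+\om$ and $\phi(\om):=\om^2-m^2$, one has $A'\equiv 1$ and $\phi^{(p)}=0$ for $p\ge 3$, so the repeated chain rule reduces $\tilde\cR^{(k)}(\om)$ to a finite linear combination of terms of the two types $\om^p A(\om) R^{(j)}(\phi(\om))$ and $\om^p R^{(j)}(\phi(\om))$ with $j\le k$ and $p\le j$. Substituting $|\zeta|\sim|\om|^2$ into (\re{AR}) yields
$$\Vert R^{(j)}(\om^2-m^2)\Vert_{\cL(L^2_\si,H^l_{-\si})}=\cO(|\om|^{-(1-l+j)}),\quad\si>1/2+j,\quad l\in\{0,1\}.$$
Choosing $l=1$ when the $\pa_x$ summand of $A(\om)$ is invoked and then using the boundedness $\pa_x:H^1_{-\si}\to L^2_{-\si}$, and $l=0$ for the $\tilde\cV$ and scalar $\om$ contributions, each term is $\cO(1)$ uniformly for $|\om|\to\infty$. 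The worst contribution is $\om^{k+1}R^{(k)}(\om^2-m^2)$, which saturates the bound and enforces the weight requirement $\si>1/2+k$.

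The main obstacle is the bookkeeping in step (iii): I must verify that every explicit power of $\om$ produced by the chain rule and by the $+\om$ summand of $A(\om)$ is exactly compensated by the $|\om|^{-1-j+l}$ decay of $R^{(j)}(\om^2-m^2)$, with the correct choice of $l\in\{0,1\}$ matched to whether the $\pa_x$ summand of $A$ is or is not activated in that term. This is precisely the point at which Remark~\re{do} matters: the perturbation $V$ in (\re{qq}) is purely multiplicative, so (\re{AR}) applies in both $l=0$ and $l=1$.
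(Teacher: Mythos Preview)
Your proposal is correct and follows exactly the route the paper indicates: the paper's entire proof of Lemma~\ref{sp1} is the single clause ``Lemma~\ref{pR} and formula (\ref{calR}) imply'', and you have carefully fleshed out precisely those implications. Your bookkeeping in part (iii) is accurate---the worst term is indeed $\om^{k+1}R^{(k)}(\om^2-m^2)$, balanced by the $\cO(|\om|^{-(k+1)})$ bound coming from (\ref{AR}) with $l=0$, and the derivative term $\om^k\,\pa_x R^{(k)}$ is handled with $l=1$---so nothing is missing.
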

\begin{remark}
The reduction to the Schr\"odinger
operator ${\bf H}$  allow us to  deduce Lemma \re{sp1} from Lemma \re{pR}
applying known results of Agmon \cite{A} and Jensen-Kato \cite{jeka}.
\end{remark}
\subsection{Time decay}
In this section we combine the spectral properties of the perturbed
resolvent $\cR (\om)$ and time decay for the unperturbed dynamics
using the (finite) Born perturbation series.
Denote by $\Si$ the set of eigenvalues of the operators $\cH$.
Lemma \re {R-b}) imply that the eigenvalues  cannot accumulate to the points
$\pm m$ and then the set  $\Si$ is finite.
Our main result is the following.
\begin{theorem}\la{main}
Let conditions (\ref{V}) and (\ref{SC}) hold. Then
\begin{equation}\la{full2}
   \Vert e^{-it{\cal H}}-\sum\limits_{\om_j\in\Si}
   e^{-i\omega_j t}P_j\Vert_{\cL (L^2_\si,L^2_{-\si})}
  ={\cal O}(|t|^{-3/2}),\quad t\to\pm\infty
\end{equation}
with  $\sigma>\fr 52$, where  $P_j$ are the Riesz
projectors onto the corresponding eigenspaces of operator $\cH$.
\end{theorem}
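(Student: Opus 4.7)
The plan is to reduce the estimate to a spectral integral on the continuous spectrum $\Ga$, split it into low- and high-energy pieces by the cutoffs $l$ and $h$ of Section 2, and treat each by a different mechanism. As in Corollary \ref{irep}, combining Stone's formula with the Cauchy theorem, Lemma \ref{R-b}, and the meromorphicity of $\cR$ on $\C\setminus\ov\Ga$ (the finiteness of $\Si$ with finite-dimensional eigenspaces following from the Gohberg--Bleher theorem applied to the compact family $\cV\cR_0(\om)$) yields
\[
e^{-it\cH}\psi_0 - \sum_{\om_j\in\Si}e^{-i\om_j t}P_j\psi_0
= \frac{1}{2\pi i}\int_\Ga e^{-i\om t}\bigl[\cR(\om+i0) - \cR(\om-i0)\bigr]\psi_0\,d\om,
\]
which I split as $Z_l(t)+Z_h(t)$ via $l+h\equiv 1$.

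For $Z_l(t)$, the integrand is compactly supported in $\om$ and, by Proposition \ref{R-exp}, admits near each endpoint $\pm m$ a structural expansion $\cR(\om) = \cR(\pm m) + (\om\mp m)^{1/2}B^\pm + r^\pm(\om)$ with $C^2$-remainder $r^\pm$. Following the Jensen--Kato--Murata scheme \ci{jeka,M}, the Fourier transform of the leading $(\om\mp m)^{1/2}$ jump against the smooth cutoff $l$ contributes $\cO(t^{-3/2})$, while the regular part $r^\pm$ is handled by two integrations by parts in $\om$ to give $\cO(t^{-2})$. Altogether $Z_l(t) = \cO(t^{-3/2})$ in $\cL(L^2_\si,L^2_{-\si})$ for $\si > 5/2$.

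For $Z_h(t)$, I exploit the gauge identity $\cR = C\tilde\cR C^{-1}$ together with (3.14), namely $\tilde\cR(\om) = (i\al\pa_x + \tilde\cV + \om)R(\om^2 - m^2)$, where $R$ is the resolvent of the matrix Schr\"odinger operator $\mathbf{H} = \mathbf{H}_0 + V$ with multiplicative $V$ (Remark \ref{do}). The finite Born expansion
\[
R(\zeta) = \sum_{n=0}^{N-1}(-1)^n[R_0(\zeta)V]^n R_0(\zeta) + (-1)^N[R_0(\zeta)V]^N R(\zeta),
\]
plugged through (3.14) and into the high-energy spectral integral, decomposes $Z_h(t)$ into a finite sum of iterated time-convolutions of the free high-energy propagator $\cU_h(t)$ (of decay $t^{-3/2}$ by Theorem \ref{TD}) with bounded multiplications by $\cV$, $\tilde\cV$, $C$, $C^{-1}$, plus a tail. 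The convolution bound $\int(1+|t-s|)^{-3/2}(1+|s|)^{-3/2}\,ds \le c(1+|t|)^{-3/2}$ preserves the rate on each finite-sum term. For the tail, Lemma \ref{pR} iii) makes each of the $N{+}1$ free resolvent factors $\cO(|\zeta|^{-1/2})$ at infinity, so for $N$ sufficiently large two $\om$-differentiations yield an absolutely convergent integrand, giving $\cO(t^{-2})$.

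The main obstacle, I expect, is the high-energy step: without the gauge reduction to $\mathbf{H}$ the free Dirac resolvent $\cR_0$ provides no high-frequency gain (Lemma \ref{sp}~iv)), so one must verify that each iterated convolution in the Born series genuinely preserves the $t^{-3/2}$ rate in the operator norm $\cL(L^2_\si,L^2_{-\si})$, and that the $N$-th Born remainder stays uniformly bounded in the same weighted topology after two $\om$-differentiations. This requires careful bookkeeping of how the weight index $\si$ propagates through each $R_0 V R_0$ factor, ultimately relying on the decay rate $\beta > 5$ from (1.6) to absorb all the intermediate weight shifts.
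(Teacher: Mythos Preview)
Your proposal follows the paper's proof closely: spectral representation on $\Gamma$, the $l/h$ split, low-energy decay via Proposition~\ref{R-exp} plus a Jensen--Kato lemma, and high-energy decay via the gauge reduction (\ref{calR}), a finite Born expansion of $R$, the free high-energy decay of Theorem~\ref{TD}, and integration by parts for the tail. Two places where your sketch would need correction to match what actually works are worth flagging.

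\textbf{Low energy.} You claim an expansion $\cR(\om)=\cR(\pm m)+(\om\mp m)^{1/2}B^\pm+r^\pm(\om)$ with a $C^2$ remainder handled by two integrations by parts. Proposition~\ref{R-exp} does not provide this: since $\cR''=\cO(|\om\mp m|^{-3/2})$ and the second derivative of $(\om\mp m)^{1/2}$ has the same singularity, subtracting such a term does not leave a $C^2$ remainder without further cancellation you have not established. The paper bypasses this by applying directly its Lemma~3.5 (after \cite{jeka}), which requires only $F(m)=F(a)=0$ and $\|F''(\om)\|=\cO(|\om-m|^{-3/2})$---exactly what (\ref{Rexp}) gives for $F=l(\om)[\tilde\cR(\om+i0)-\tilde\cR(\om-i0)]$.

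\textbf{High energy.} The Born series is for the matrix \emph{Schr\"odinger} resolvent $R$, so the inner factors are $R_0(\om^2-m^2)$, not $\cR_0(\om)$, and the potential appearing between them is $V$ of (\ref{qq}), not $\cV$ or $\tilde\cV$. Only the leftmost factor absorbs $(i\al\pa_x+m\beta+\om)$ to become $\cR_0$; the others correspond in time to the Klein--Gordon propagator $G_h$, not $\cU_h$. The paper takes $N=2$: the zeroth term is $\cU_h+(\tilde\cV-m\beta)G_h$ (Theorem~\ref{TD} and (\ref{G-dec})); the first-order term is split into a piece with bounded prefactor $(\tilde\cV-m\beta)$ handled by two integrations by parts, and the piece $\cR_0VR_0$ written as a \emph{single} convolution $\cU_{h_1}\!*V G_{h_1}$ via the factorization $h=h_1^2$; the tail $R_0VR_0VR$ already decays fast enough (by (\ref{A}) and (\ref{AR}), crucially using that $V$ is multiplicative, Remark~\ref{do}) for two integrations by parts. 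Your description ``iterated convolutions of $\cU_h$ with multiplications by $\cV,\tilde\cV$'' is therefore not literally what arises, though the underlying mechanism you identify is the correct one.
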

\begin{proof}
By (\re{CHC}) it suffices to prove  that for  $\sigma>5/2$
\begin{equation}\la{full1}
   \Vert e^{-it{\tilde\cH}}-\sum\limits_{\om_j\in\Si}
   e^{-i\omega_j t}\tilde P_j\Vert_{\cL (L^2_\si,L^2_{-\si})}
  ={\cal O}(|t|^{-3/2}),\quad t\to\pm\infty
\end{equation}
where  $\tilde P_j=C^{-1}P_jC$ are the Riesz
projectors onto the corresponding eigenspaces of operator $\tilde H$.
Lemma \re{sp1}  and asymptotics (\re{tRexp}) imply similarly to (\re{Gint1}), that
\be\la{Psi}
 e^{-it{\tilde{\cal H}}}\psi_0
-\sum\limits_{\om_j\in\Sigma}e^{-i\omega_j t}\tilde P_j\psi_0=
\fr 1{2\pi i}\int\limits_\Gamma e^{-i\om t}
\Big[\tilde\cR (\om+i0)-\tilde\cR (\om-i0)\Big]\psi_0~ d\om
=\psi_l(t)+\psi_h(t)
\ee
where
$$
\tilde P_j\psi_0:=-\fr 1{2\pi i}\int_{|\om-\om_j|=\delta}\tilde\cR(\om)\psi_0 d\om
$$
with a small $\delta>0$, and
\be\la{idl}
\psi_l(t)=\fr 1{2\pi i}\int\limits_\Gamma l(\om)e^{-i\om t}
\Big[\tilde\cR (\om+i0)-\tilde\cR (\om-i0)\Big]\psi_0~ d\om
\ee
\be\la{idh}
\psi_h(t)=\fr 1{2\pi i}\int\limits_\Gamma h(\om)e^{-i\om t}
\Big[\tilde\cR (\om+i0)-\tilde\cR (\om-i0)\Big]\psi_0~ d\om
\ee
where $l(\om)$ and $h(\om)$ are defined in Section \ref{Tdec}.
Further we analyze $\psi_l(t)$ and $\psi_h(t)$ separately.
\subsubsection{Low energy decay}
We consider only the integral over $(m,m+2\ve)$.
The integral over $(-m-2\ve,-m)$ is dealt with in the same way.
We prove the desired decay of $\psi_l(t)$ using
a special case of Lemma 10.2 from \cite{jeka}.
Denote by ${\bf B}$  a Banach space with the norm $\Vert\cdot\Vert\,.$
\begin{lemma}\label{jk}
Let $F\in C([m, a],\, {\bf B})$, satisfy
\be\la{Zc}
  F(m)=F(a)=0,~~~~
  \Vert F''(\om)\Vert=\cO (|\om-m|^{-3/2}),~~~\om\to m
\ee
Then
\be\la{Zyg}
  \int\limits_m^a e^{-it\omega}F(\omega)d\omega =\cO (t^{-3/2}),
  \quad t\to\infty
\ee
\end{lemma}
Due  to (\ref{tRexp}), we can apply Lemma \ref{jk} with
$F=l(\om)\big(\tilde\cR (\om+i0)-\tilde\cR (\om-i0)\big)$,
${\bf B}= \cL (L^2_{\si},L^2 _{-\si})$,
$a=m+2\ve$ with a small $\ve>0$ and $\si>5/2$, to get
\be\la{small}
  \Vert \psi_l(t)\Vert_{L^2_{-\si}}
 \le C(1+|t|)^{-3/2}\Vert \psi_0\Vert_{L^2 _\si}
  \quad t\in\R,\quad \sigma>5/2
\ee
\subsubsection{High energy decay}
The resolvents $R(\zeta)$, $R_0(\zeta)$
are related by the Born perturbation series
\be\la{id}
R(\zeta)=R_0(\zeta)I-R_0(\zeta)V R_0(\zeta)
+R_0(\zeta)VR_0(\zeta)VR(\zeta), \quad\zeta\in\C\setminus[0,\infty)
\ee
which follows by iteration of
$ R(\zeta)=R_0(\zeta)I-R_0(\zeta)VR(\zeta)$.
Then by (\re{calR}) we have
\beqn\la{Id}
\tilde\cR(\om)&=&(i\al\pa_x+\tilde {\cal V}(x)+\om)
\Big[R_0(\om^2-m^2)I-R_0(\om^2-m^2)V R_0(\om^2-m^2)\\
\nonumber
&+&R_0(\om^2-m^2)V R_0(\om^2-m^2)VR(\om^2-m^2)\Big]
\eeqn
Let us substitute the series (\re{Id}) into the spectral representation
(\re{idh}) for $\psi_h(t)$:
\beqn\nonumber
&&\!\!\!\!\!\!\!\!\!\!\!\psi_h(t)
=\fr 1{2\pi i}\int\limits_\Gamma e^{-i\om t}h(\om)
(i\al\pa_x+\tilde\cV(x)+\om)
\Big[R _0(\zeta_+)-R _0(\zeta_-)\Big]\psi_0~ d\om\\
\nonumber
\!\!\!&+&\!\!\!\fr 1{2\pi i}\int\limits_\Gamma\!\! e^{-i\om t}h(\om)
(i\al\pa_x+\tilde\cV(x)+\om)\Big[R_0(\zeta_+)V R_0(\zeta_+)
-R_0(\zeta_-)V R_0(\zeta_-)\Big]\psi_0 d\om\\
\nonumber
\!\!\!&+&\!\!\!\frac 1{2\pi i}\int\limits_\Gamma e^{-i\om t}h(\om)
(i\al\pa_x+\tilde\cV(x)+\om)
\Big[R_0(\zeta_+\!)VR_0(\zeta_+\!)VR(\zeta_+\!)
-R_0(\zeta_-\!)VR_0(\zeta_-\!)VR(\zeta_-\!)\Big]\psi_0 d\om\\
\nonumber
\!\!\!&=&\!\!\!\psi_{h1}(t)+\psi_{h2}(t)+\psi_{h3}(t),\quad t\in\R
\eeqn
where
$\zeta_+=(\om+i0)^2-m^2$, $\zeta_-=(\om-i0)^2-m^2$.
We analyze each term $\psi_{hk}$, $k=1,2,3$ separately.
\\
{\it Step i)}
By (\ref{R0R0}), (\ref{Gh}) and  (\ref{Uh}) for the first term
$\psi_{h1}(t)$ we have
$$
\psi_{h1}(t)=-i(\tilde \cV(x)-m\beta)\cG_h(t)\psi_0+\cU_h(t)\psi_0
$$
Hence, (\re{Gb1}) and  (\re{G-dec}) imply that
\be\la{lins1}
  \Vert \psi_{h1}(t)\Vert_{L^2_{-\si}}
 \le C(1+|t|)^{-3/2}\Vert \psi_0\Vert_{L^2_\si},
  \quad t\in\R,\quad \sigma>5/2
\ee
{\it Step ii)}
Let us represent the  second term $\psi_{h2}(t)$ as
\be\la{phi}
\psi_{h2}(t)=\varphi_1(t)+\varphi_2(t)
\ee
where
$$
\varphi_1(t)=\fr1{2\pi i}\!\int\limits_\Gamma e^{-i\om t}h(\om)(\tilde\cV-m\beta)
\Big[R_0(\zeta_+)V R_0(\zeta_+)-R_0(\zeta_-)V R_0(\zeta_-)\Big]\psi_0 d\om
$$
$$
\varphi_2(t)=\fr 1{2\pi i}\int\limits_\Gamma e^{-i\om t}h(\om)(i\al\pa_x+m\beta+\om)
\Big[R_0(\zeta_+)V R _0(\zeta_+)
-R_0(\zeta_-)V R _0(\zeta_-)\Big]\psi_0 d\om
$$
$$
=\fr 1{2\pi i}\int\limits_\Gamma e^{-i\om t}h(\om)
\Big[\cR_0(\om+i0)V R _0(\zeta_+)
-\cR_0(\om-i0)V R _0(\zeta_-)\Big]\psi_0 d\om
$$
Let us consider the first term $\varphi_1(t)$.
Denote
$$
P(\om)=h(\om)(\tilde\cV\!-m\beta)\Big[R_0(\zeta_+)V R_0(\zeta_+)-
R_0(\zeta_-)V R_0(\zeta_-)\Big]\psi_0
$$
We have
$$
\supp P(\om)\in\Gamma_\ve:=(-\infty,-m-\ve)\cup(m+\ve,\infty)
$$
and $P''\in L^1(\Gamma_\ve;\cL (L^2_{\si},L^2_{-\si}))$
with $\si>5/2$  by (\ref{A}) with $l=0$ and $k=2$
since $V$ does not contain differential operators, see Remark \re{do}.
Then, two times partial integration implies that
\be\la{vp1}
\Vert \varphi_1(t)\Vert_{L^2_{-\si}}
 \le C(1+|t|)^{-2}\Vert \psi_0\Vert_{L^2_\si},\quad t\in\R,\quad \sigma>5/2
\ee
Now we consider the second term $\varphi_2(t)$.
Denote  $h_1(\om)=\sqrt{h(\om)}$
(we can assume that $h(\om)\ge 0$ and $h_1\in\C_0^\infty(\R)$).
We set
$$
\phi_{h1}=\fr 1{2\pi i}\int\limits_\Gamma e^{-i\om t}h_1(\om)
\Big[R_0(\zeta_+)-R_0(\zeta_-)\Big]\psi_0~ d\om
=-i\cG_{h_1}(t)\psi_0
$$
It is obvious that for $\phi_{h1}$ the inequality (\ref{lins1}) also holds.
Namely,
$$
  \Vert \phi_{h1}(t)\Vert_{L^2_{-\si}}
 \le C(1+|t|)^{-3/2}\Vert \psi_0\Vert_{L^2_\si},
  \quad t\in\R,\quad \sigma>5/2
$$
Now the term $\varphi_2(t)$ can be rewritten as a convolution.
\begin{lemma}(cf.  \ci[Lemma 3.11]{1dkg})
The convolution representation holds
\be\la{P2}
  \varphi_2(t)=
  i\int\limits_0^t \cU_{h_1}(t-\tau)V\phi_{h1}(\tau)~d\tau,~~~~t\in\R
\ee
where the integral converges in $L^2_{-\si}$ with $\si>5/2$.
\end{lemma}
Applying Theorem \ref{TD} with $h_1$ instead of $h$
to the integrand in (\re{P2}), we obtain that
\be\la{uv}
  \Vert \cU_{h_1}(t-\tau)V\phi_{h1}(\tau)\Vert_{L^2_{-\si}}
 \le\ds\fr{C\Vert V\phi_{h1}(\tau)\Vert_{L^2_{\si'}}}{(1+|t\!-\!\tau|)^{3/2}}
 \le\ds\fr{C_1\Vert\phi_{h1}(\tau)\Vert_{L^2_{\si'-\beta}}}{(1+|t\!-\!\tau|)^{3/2}}
 \le\ds\fr{C_2\Vert\psi_0\Vert_{L^2_{\si}}}{(1\!+|t\!-\!\tau|)^{3/2}(1+|\tau|)^{3/2}}
\ee
where $\si'\in (5/2,\beta-5/2)$.
Therefore, integrating here in $\tau$, we obtain by (\re{P2}) that
\be\la{lins2}
  \Vert\varphi_2(t)\Vert_{L^2_{-\si}}\le
  C(1+|t|)^{-3/2}\Vert\psi_0\Vert_{L^2 _\si}, \quad t\in\R,\quad \sigma>5/2
\ee
{\it Step iii)}
Denote by $Q(\om)$ the integrand in $\psi_{h3}$.
Since $Q''\in L^1(\Gamma_\ve;\cL (L^2_{\si},L^2_{-\si}))$
with $\si>5/2$  by (\ref{A}) and (\ref{AR})
then, two times partial integration implies that
\be\la{lins3}
\Vert\psi_{h3}(t)\Vert_{L^2_{-\si}}
 \le C(1+|t|)^{-2}\Vert \psi_0\Vert_{L^2_\si},\quad t\in\R,\quad \sigma>5/2
\ee
Finally, the bounds (\re{lins1}), (\re{vp1}), (\re{lins2}) and \re{lins3}) imply
$$
\Vert\psi_h(t)\Vert_{L^2_{-\si}}\le
  C(1+|t|)^{-3/2}\Vert\psi_0\Vert_{L^2 _\si}, \quad t\in\R,\quad \sigma>5/2
$$
Theorem \re{main} is proved.
\end{proof}
\bc
The asymptotics (\re{full1}) imply (\re{full}) with the projector
\be\la{Pr}
\cP_c:=1-\cP_d,\quad \cP_d=\sum_{\om_j\in\Si}P_j
\ee
\ec
\setcounter{equation}{0}
\section{Application to the asymptotic completeness}
We apply the obtained results to prove the asymptotic completeness
by standard Cook's argument.
\begin{theorem}\la{sc}
Let conditions  (\re{V}) and (\re{SC}) hold.
Then
\\
i)  For solution to (\ref{Dir}) with any initial function
$\psi(0)\in L^2$, the  long time asymptotics  hold
\be\la{scat}
  \psi(t)=
  \sum\limits_{\om_j\in\Si}  e^{-i\om_j t}\psi_j
  +\cU(t)\phi_\pm+r_\pm(t)
\ee
where $\psi_j$ are the corresponding eigenfunctions,
$\phi_\pm\in L^2$ are the scattering states, and
\be\la{rem0}
   \Vert r_\pm(t)\Vert_{L^2}\to 0,~~~~~~t\to\pm\infty
\ee
ii) Furthermore,
\be\la{rem}
   \Vert r_\pm(t)\Vert_{L^2} =\cO (|t|^{-1/2})
\ee
if $\psi(0)\in L^2_\si$ with $\si>5/2$.
\end{theorem}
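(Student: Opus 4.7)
The plan is to apply Cook's method to the modified wave operator, leveraging Theorem \ref{main} as the key dispersive input. First, I would split
$$e^{-it\cH}\psi_0 = \sum_{\om_j\in\Si} e^{-i\om_j t}P_j\psi_0 + e^{-it\cH}\cP_c\psi_0,$$
which accounts for the bound state sum in \eqref{scat} with $\psi_j := P_j\psi_0$, so that it remains to produce $\phi_\pm\in L^2$ matching the continuous part. To that end I would set
$$f(t) := \cU(-t)e^{-it\cH}\cP_c\psi_0$$
and aim to show $f(t)$ is Cauchy in $L^2$ as $t\to\pm\infty$.

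Second, for $\psi_0\in L^2_\si$ with $\si>5/2$, on a dense core (e.g.\ $L^2_\si\cap H^1$, extended by approximation) one has
$$f'(t) = -i\,\cU(-t)\,\cV\,e^{-it\cH}\cP_c\psi_0.$$
Since $|\cV(x)|\le C\langle x\rangle^{-\beta}$ with $\beta>5$, multiplication by $\cV$ maps $L^2_{-\si}\to L^2$ boundedly (any $\si\le\beta$ works). Combining with the $t^{-3/2}$ decay from Theorem \ref{main} and the unitarity of $\cU(-t)$ yields
$$\Vert f'(t)\Vert_{L^2} \le C\Vert e^{-it\cH}\cP_c\psi_0\Vert_{L^2_{-\si}} \le C'(1+|t|)^{-3/2}\Vert\psi_0\Vert_{L^2_\si}.$$
This is integrable, so $\phi_\pm := \lim_{t\to\pm\infty}f(t)$ exists in $L^2$, and integrating the tail gives
$$\Vert \phi_\pm - f(t)\Vert_{L^2} \le C''\,|t|^{-1/2}\,\Vert\psi_0\Vert_{L^2_\si}.$$
Applying the isometry $\cU(t)$ to both sides and identifying $r_\pm(t) = \cU(t)\phi_\pm - e^{-it\cH}\cP_c\psi_0$ gives the bound \eqref{rem}, proving (ii).

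Third, for (i) with general $\psi_0\in L^2$, I would extend by density. The map $\psi_0\mapsto\phi_\pm$ defined above on $L^2_\si$ is a contraction into $L^2$ (since both $\cU(-t)$ and $e^{-it\cH}\cP_c$ are contractions on $L^2$), hence extends to a bounded operator on $L^2$. A standard three-epsilon argument: given $\psi_0\in L^2$ and $\varepsilon>0$, pick $\psi_0^n\in L^2_\si$ with $\Vert\psi_0-\psi_0^n\Vert_{L^2}<\varepsilon$, apply the already established decay of $r_\pm^n(t)$ for $\psi_0^n$, and control the difference by unitarity. This yields $\Vert r_\pm(t)\Vert_{L^2}\to 0$.

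The only delicate point I foresee is the rigorous justification of the differentiation formula for $f(t)$ when $\psi_0$ is merely in $L^2_\si$ and not necessarily in the domain $H^1$ of $\cH$ or $\cH_0$: this is bypassed by first establishing the formula on $L^2_\si\cap H^1$ (where both groups are strongly differentiable), then writing the integral identity $f(t_2)-f(t_1)=\int_{t_1}^{t_2}f'(s)\,ds$, and extending by density using the bound on $f'$ which depends only on $\Vert\psi_0\Vert_{L^2_\si}$, not on any higher Sobolev norm.
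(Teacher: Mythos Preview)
Your proposal is correct and follows essentially the same Cook's-method argument as the paper: the paper writes the Duhamel integral $\psi(t)=\cU(t)\psi(0)+\int_0^t\cU(t-\tau)\cV\psi(\tau)\,d\tau$ and splits off the tail to define $\phi_\pm$ and $r_\pm$, which is exactly the integrated form of your derivative identity for $f(t)=\cU(-t)e^{-it\cH}\cP_c\psi_0$. Your treatment of the density extension and the differentiation justification is in fact more carefully spelled out than the paper's.
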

\begin{proof} Denote ${\cal X}_d:=\cP_d L^2$,  ${\cal X}_c:=\cP_c L^2$.
For $\psi(0)\in{\cal X}_d$ the asymptotics  (\ref{scat}) obviously hold
with $\phi_\pm=0$ and $r_\pm(t)=0$.
Hence, it remains to prove for $\psi(0)\in {\cal X}_c$ the asymptotics
\be\la{scat1}
\psi(t)=\cU(t)\phi_\pm+r_\pm(t)
\ee
with the remainder satisfying (\re{rem0}). Moreover, it suffices to prove
the asymptotics (\ref{scat1}), (\re{rem}) for $\psi(0)\in{\cal X}_c\cap L^2_\si$
with $\si>5/2$ since the space $L^2_\si$ is dense in $L^2$, while
the group $\cU(t)$ is unitary in $L^2$.
In this case Theorem \re{main} implies the decay
\be\label{fullp}
 \Vert \psi(t)\Vert_{L^2_{-\si}}\le C(1+|t|)^{-3/2}\Vert\psi(0)\Vert_{L^2_{\si}},
\quad t\to\pm\infty
\ee
The function $\psi(t)$ satisfies the equation (\re{Dir}),
$$
i\dot\psi(t)=(\cH_0+\cV) \psi(t)
$$
Hence, the corresponding Duhamel equation reads
\be\la{Dug}
   \psi(t)= \cU(t)\psi(0)+
   \int\limits_0^t \cU(t-\tau)\cV\psi(\tau)d\tau, ~~~~t\in\R
\ee
Let us rewrite  \eqref{Dug} as
\be\la{Dug1}
   \psi(t)=\cU(t)\Big[\psi(0)+\int\limits_0^{\pm\infty}
   \cU (-\tau)\cV\psi(\tau)d\tau\Big]
   -\int\limits_t^{\pm\infty} \cU(t-\tau)\cV\psi(\tau)d\tau
    =\cU(t)\phi_\pm+r_\pm(t)
\ee
It remains to prove that $\phi_\pm\in L^2$ and (\ref{rem}) holds.
Let us consider the sign ``+'' for the concreteness.
The  unitarity of $\cU (t)$   in $L^2$, the condition (\re{V})
and the decay (\ref{fullp}) imply that for $\si'\in(5/2,\min\{\si,\beta\})$
$$
   \int\limits_0^{\infty}\Vert \cU(-\tau)\cV
   \psi(\tau)\Vert_{L^2}~d\tau \le C\int\limits_0^{\infty}\Vert \cV
   \psi(\tau)\Vert_{L^2}~d\tau\le C_1\int\limits_0^{\infty}\Vert
   \psi(\tau)\Vert_{L^2_{-\si'}}d\tau
   \le C_2\int\limits_0^{\infty}\fr{\Vert\psi(0)\Vert_{L^2_{\si}}d\tau}{(1+\tau)^{-3/2}}
   <\infty
$$
since $|{\cal V}(x)|\le C\langle x\rangle^{-\beta}
\le C'\langle x\rangle^{-\si'}$.
Hence, $\phi_+\in L^2$. The estimate (\re{rem}) follows similarly.
\end{proof}

\end{document}